\documentclass[a4paper,10pt]{article}
\usepackage{graphicx,url,amsmath,amssymb,color,subfigure,anysize,amsthm,algorithm,algpseudocode,palatino}
\marginsize{3cm}{3cm}{3cm}{3cm}

\newcommand{\eps}{\varepsilon}

\newcommand{\p}{\partial}
\newcommand{\mc}{\mathbf{c}}
\newcommand{\md}{\mathbf{d}}
\newcommand{\me}{\mathbf{e}}
\newcommand{\mn}{\mathbf{n}}

\newcommand{\mr}{\mathbf{r}}
\newcommand{\mt}{\mathbf{t}}

\newtheorem{thm}{Theorem}[section]

\newtheorem{lem}[thm]{Lemma}

\begin{document}
\title{Structure analysis of single- and multi-frequency subspace migrations in inverse scattering problems}
\author{Young Deuk Jo, Young Mi Kwon, Joo Young Huh, and Won-Kwang Park\thanks{Department of Mathematics, Kookmin University, Seoul, 136-702, Korea. parkwk@kookmin.ac.kr}}
\date{}
\maketitle
\begin{abstract}
  In this literature, we carefully investigate the structure of single- and multi-frequency imaging functions, that are usually employed in inverse scattering problems. Based on patterns of the singular vectors of the Multi-Static Response (MSR) matrix, we establish a relationship between imaging functions and the Bessel function. This relationship indicates certain properties of imaging functions and the reason behind enhancement in the imaging performance by multiple frequencies. Several numerical simulations with a large amount of noisy data are performed in order to support our investigation.
\end{abstract}

\section{Introduction}\label{sec:1}
One of the main objective of the inverse scattering problem is to identify the characteristics of unknown targets from measured scattered field or far-field pattern. In research fields such as physics, medical science, and materials engineering, this is an interesting and important problem. Related works can be found in \cite{A,C,DEKPS,D,DCGS} and references therein. In order to solve this problem, various algorithms for finding the locations and/or shapes of targets have been accordingly developed.

In many research studies \cite{ADIM,C,DEKPS,DL,LB,PL4}, the shape reconstruction method is based on Newton-type iterative algorithms. However, for a successful shape reconstruction using these algorithms, the iterative procedure must begin with a good initial guess that is close to the unknown target because it highly depends on the initial guess; for more details, refer to \cite{KSY,PL4}.

To finding a good initial guess, alternative non-iterative reconstruction algorithms have been developed, such at the MUltiple SIgnal Classification (MUSIC)-type algorithm \cite{AK,PL1,PL3}, linear sampling method \cite{CH,CHM}, topological derivative strategy \cite{AGJK,AKLP,MKP,P3,P4}, and the multi-frequency based algorithm such as Kirchhoff and subspace migrations \cite{A,AGKPS,P1,P2,PL2}. Among them, although the multi-frequency based subspace migration has exhibited potential as a non-iterative imaging technique, a mathematical identification of its structure needs to be performed for its heuristical applications, which is the motivation behind.

In this paper, by intensively analyzing the structure of single- and multi-frequency subspace migration, we discover some properties and confirm the reason behind the enhancement in the imaging performance by applying multiple frequencies. In recent work \cite{AGKPS}, this fact was verified by the Statistical Hypothesis Testing but our approach is to find a relationship between imaging functions and Bessel functions of the first kind of the integer order.

This paper is organized as follows. In section \ref{sec:2}, we briefly review the two-dimensional direct scattering problem, and an asymptotic expansion formula for far-field patterns, and introduce the imaging function introduced in \cite{P1}. In section \ref{sec:3}, we analyze the single- and multi-frequency based imaging functions and discuss their properties. In section \ref{sec:4}, we present several numerical experiments and discuss the effectiveness, robustness, and limitation of imaging functions. Finally, a brief conclusion is given in section \ref{sec:5}.

\section{Review on imaging function}\label{sec:2}
In this section, we survey the two-dimensional direct scattering problem and an imaging algorithm. A more detailed discussion can be found in \cite{AGKPS,AK,P1,P2,PL2}.

\subsection{Direct scattering problem and asymptotic expansion formula}
Let $\Sigma_m$ be a homogeneous inclusion with a small diameter $\rho$ in the two-dimensional space $\mathbb{R}^2$. Throughout this paper, we assume that every $\Sigma_m$ is expressed as
\[\Sigma_m=\mr_m+\rho\mathcal{D}_m,\]
where $\mr_m$ and $\rho$ denote the location and size of $\Sigma_m$, respectively. Here, $\mathcal{D}_m$ is a simple connected smooth domain containing the origin.

Let $\eps_0$ and $\mu_0$ respectively denote the dielectric permittivity and magnetic permeability of $\mathbb{R}^2$. Similarly, we let $\eps_m$ and $\mu_m$ be those of $\Sigma_m$. For simplicity, let $\Sigma$ be the collection of $\Sigma_m$, $m=1,2,\cdots,M,$ and we define the following piecewise constants:
\[\eps(\mr)=\left\{\begin{array}{ccl}
\eps_m&\mbox{for}&\mr\in\Sigma_m\\
\eps_0&\mbox{for}&\mr\in\mathbb{R}^2\backslash\overline{\Sigma}
\end{array}\right.
\quad\mbox{and}\quad
\mu(\mr)=\left\{\begin{array}{ccl}
\mu_m&\mbox{for}&\mr\in\Sigma_m\\
\mu_0&\mbox{for}&\mr\in\mathbb{R}^2\backslash\overline{\Sigma}.
\end{array}\right.\]
Throughout this paper, we assume that $\eps_0=\mu_0=1$ and $\eps_m>\eps_0$, $\mu_m>\mu_0$ for $m=1,2,\cdots,M$.
At a given frequency $\omega$, let $u_{\mbox{\tiny tot}}(\mr,\md_l;\omega)$ be the time-harmonic total field that satisfies the Helmholtz equation
\begin{equation}\label{HelmHoltzEquation}
  \nabla\cdot\bigg(\frac{1}{\mu(\mr)}\nabla u_{\mbox{\tiny tot}}(\mr,\md_l;\omega)\bigg)+\omega^2\eps(\mr)u_{\mbox{\tiny tot}}(\mr,\md_l;\omega)=0
\end{equation}
with transmission conditions at the boundaries of $\Sigma_m$.

Let $u_{\mbox{\tiny inc}}(\mr,\md_l;\omega)$ be the solution of (\ref{HelmHoltzEquation}) without $\Sigma$. In this paper, we consider the following plane-wave illumination: for a vector $\md_l\in\mathfrak{C}^1$, $u_{\mbox{\tiny inc}}(\mr,\md_l;\omega)=\exp(j\omega\md_l\cdot\mr)$. Here, $\mathfrak{C}^1$ denotes a two-dimensional unit circle.

Generally, the total field $u_{\mbox{\tiny tot}}$ can be divided into the incident field $u_{\mbox{\tiny inc}}$ and the unknown scattered field $u_{\mbox{\tiny scat}}$, which satisfies the Sommerfeld radiation condition
\[\lim_{|\mr|\to\infty}\sqrt{|\mr|}\left(\frac{\p u_{\mbox{\tiny scat}}(\mr,
\md_l;\omega)}{\p|\mr|}-j k_0u_{\mbox{\tiny scat}}(\mr,\md_l;\omega)\right)=0,\]
uniformly in all directions $\hat{\mr}=\frac{\mr}{|\mr|}$. Note that since we assumed $\eps_0=\mu_0=1$, wavenumber $k_0$ satisfies $k_0=\omega\sqrt{\eps_0\mu_0}=\omega$. As given in \cite{AK}, $u_{\mbox{\tiny scat}}$ can be written as the following asymptotic expansion formula in terms of $\rho$
\begin{multline}\label{ScatteredField}
  u_{\mbox{\tiny scat}}(\mr,\md_l;\omega)=\rho^2\sum_{m=1}^{M}\bigg(\nabla u_{\mbox{\tiny inc}}(\mr_m,\md_l;\omega)\cdot\mathbb{T}(\mr_m)\cdot\nabla\Phi(\mr_m,\mr,\omega)\\+\omega^2(\eps-\eps_0)\mbox{area}(\mathcal{D}_m)u_{\mbox{\tiny inc}}(\mr_m,\md_l;\omega)\Phi(\mr_m,\mr,\omega)\bigg)+o(\rho^2),
\end{multline}
where $o(\rho^2)$ is uniform in $\mr_m\in\Sigma_m$ and $\md_l\in\mathfrak{C}^1$. Here $\mbox{area}(\mathcal{D}_m)$ denotes the area of $\mathcal{D}_m$, $\mathbb{T}(\mr_m)$ is a $2\times2$ symmetric matrix:
\[\mathbb{T}(\mr_m)=\frac{2\mu_0}{\mu_m+\mu_0}\mbox{area}(\mathcal{D}_m)\mathbb{I}_2,\]
where $\mathbb{I}_n$ denotes the $n\times n$ identity matrix, and $\Phi(\mr_m,\mr,\omega)$ is the two-dimensional time harmonic Green function (or fundamental solution to Helmholtz equation)
\[\Phi(\mr_m,\mr,\omega)=-\mu_0\frac{j}{4}H_0^1(\omega|\mr_m-\mr|),\]
where $H_0^1$ is the Hankel function of order zero and of the first kind.

The far-field pattern is defined as function $F(\hat{\mr},\md_l)$ that satisfies
\begin{equation}\label{FarField}
  u_{\mbox{\tiny scat}}(\mr,\md_l;\omega)=\frac{\exp(jk_0|\mr|)}{\sqrt{|\mr|}}F(\hat{\mr},\md_l)+o\left(\frac{1}{\sqrt{|\mr|}}\right)
\end{equation}
as $|\mr|\longrightarrow\infty$ uniformly on $\hat{\mr}=\frac{\mr}{|\mr|}$.

\subsection{Introduction to subspace migration}
The imaging algorithm introduced in \cite{P1} used the structure of a singular vector of the Multi-Static Response (MSR) matrix $\mathbb{M}=(F_{pq})=(F(\hat{\mr}_p,\md_q))_{p,q=1}^{N}$, whose elements $F(\hat{\mr}_p,\md_q)$ is (\ref{FarField}) with observation number $p$ and incident number $q$. Note that by combining (\ref{ScatteredField}), (\ref{FarField}), and the asymptotic behavior of the Hankel function, the far-field pattern $F(\hat{\mr}_p,\md_q)$ can be represented as the asymptotic expansion formula (see \cite{AK} for instance)
\begin{multline}\label{AFS}
  F(\hat{\mr}_p,\md_q)\approx\rho^2\frac{\omega^2(1+j)}{4\sqrt{\omega\pi}}\sum_{m=1}^{M} \left(\frac{\eps-\eps_0}{\sqrt{\eps_0\mu_0}}\mbox{area}(\mathcal{D}_m)-\hat{\mr}_p\cdot\mathbb{T}(\mr_m)\cdot\md_q\right)
  \\\times\exp\bigg(jk_0(\md_q-\hat{\mr}_p)\cdot \mr_m\bigg).
\end{multline}
For the sake of simplicity, we eliminate the constant $\frac{\omega^2(1+j)}{4\sqrt{\omega\pi}}$ in (\ref{AFS}). Then, the incident and observation direction configurations are kept same, i.e., for each $\hat{\mr}_p=-\md_p$, the $pq$-th element of the MSR matrix $\mathbb{M}$ is given by
\begin{multline*}
  F_{pq}=F(\hat{\mr}_p,\md_q)\bigg|_{\hat{\mr}_p=-\md_p}\approx\rho^2\sum_{m=1}^M\bigg[\frac{\eps_m-\eps_0}{\sqrt{\eps_0\mu_0}}\mbox{area}(\mathcal{D}_m) \\+\frac{2\mu_0}{\mu_m+\mu_0}\mbox{area}(\mathcal{D}_m)\md_p\cdot\md_q\bigg]\exp\bigg(jk_0(\md_p+\md_q)\cdot\mr_m\bigg).
\end{multline*}

Based on the above representation of $F_{pq}$, we introduce a vector $\mathbf{D}(\mr;\omega)\in\mathbb{C}^{N\times3}$ as
\begin{equation}\label{VecD}
  \mathbf{D}(\mr;\omega):=\left(
                            \begin{array}{c}
                              \me_1\exp(jk_0\md_1\cdot\mr) \\
                              \me_2\exp(jk_0\md_2\cdot\mr) \\
                              \vdots \\
                              \me_N\exp(jk_0\md_N\cdot\mr) \\
                            \end{array}
                          \right),\quad\mbox{where}\quad\me_p=(1,\md_p)^T.
\end{equation}
Then $\mathbb{M}$ can be decomposed as follows:
\[\mathbb{M}=\sum_{m=1}^{M}\mathbf{D}(\mr_m;\omega)
\left(\begin{array}{cc}
    \displaystyle\rho^2\frac{\eps_m-\eps_0}{\sqrt{\eps_0\mu_0}}\mbox{area}(\mathcal{D}_m) & \mathbb{O}_{2\times2} \\
    \mathbb{O}_{2\times1} & \rho^2\mathbb{T}(\mr_m) \\
  \end{array}\right)
\mathbf{D}(\mr_m;\omega)^T,\]
where $\mathbb{O}_{p\times q}$ demotes the $p\times q$ zero matrix. This decomposition leads us to introduce an imaging algorithm as follows. First, let us perform the Singular Value Decomposition (SVD) as follows:
\begin{equation}\label{SVD}
  \mathbb{M}=\mathbb{US\overline{V}}^T=\sum_{m=1}^{M}\sigma_m(\omega)\mathbf{U}_m(\omega)\overline{\mathbf{V}}_m(\omega)^T,
\end{equation}
where $\mathbf{U}_m$ and $\mathbf{V}_m$ are the left and right singular vectors, respectively, and $\overline{a}$ denotes the complex conjugate of $a$. Then, based on the structure of (\ref{VecD}), we define a vector $\mathbf{\hat{D}}(\mr;\omega)\in\mathbb{C}^{N\times1}$:
\begin{equation}\label{VecDhat}
  \mathbf{\hat{D}}(\mr;\omega):=\left(
                            \begin{array}{c}
                              \mc\cdot(1,\md_1)^T\exp(jk_0\md_1\cdot\mr) \\
                              \mc\cdot(1,\md_2)^T\exp(jk_0\md_2\cdot\mr) \\
                              \vdots \\
                              \mc\cdot(1,\md_N)^T\exp(jk_0\md_N\cdot\mr) \\
                            \end{array}
                          \right),\quad\mc\in\mathbb{C}^{3\times1}\backslash\{\mathbf{0}\},
\end{equation}
and corresponding unit vector
\begin{equation}\label{VecW}
  \mathbf{W}(\mr;\omega):=\frac{\mathbf{\hat{D}}(\mr;\omega)}{|\mathbf{\hat{D}}(\mr;\omega)|}.
\end{equation}
With this, we can introduce a subspace migration as follows
\begin{equation}\label{ImagingFunction}
  \mathbb{W}(\mathbf{r};\omega):=\left|\sum_{m=1}^{M}\bigg(\overline{\mathbf{W}}(\mathbf{r};\omega)\cdot\mathbf{U}_m(\omega)\bigg) \bigg(\overline{\mathbf{W}}(\mathbf{r};\omega)\cdot\overline{\mathbf{V}}_m(\omega)\bigg)\right|.
\end{equation}
Note that since the first $M$ columns of the matrices $\{\mathbf{U}_1,\mathbf{U}_2,\cdots,\mathbf{U}_M\}$ and $\{\mathbf{V}_1,\mathbf{V}_2,\cdots,\mathbf{V}_M\}$ are orthonormal, we can observe that
\begin{align*}
  &\overline{\mathbf{W}}(\mathbf{r};\omega)\cdot\mathbf{U}_m(\omega)\approx1\quad\mbox{and}\quad\overline{\mathbf{W}}(\mathbf{r};\omega)\cdot\overline{\mathbf{V}}_m(\omega)\approx1\quad\mbox{if}\quad\mr=\mr_m\\
  &\overline{\mathbf{W}}(\mathbf{r};\omega)\cdot\mathbf{U}_m(\omega)\approx0\quad\mbox{and}\quad\overline{\mathbf{W}}(\mathbf{r};\omega)\cdot\overline{\mathbf{V}}_m(\omega)\approx0\quad\mbox{if}\quad\mr\ne\mr_m,
\end{align*}
for $m=1,2,\cdots,M$. Therefore, $\mathbb{W}(\mathbf{r};\omega)$ will plots peaks of magnitude of $1$ at $\mr=\mr_m\in\Sigma_m$, and of small magnitude at $\mr\notin\Sigma_m$ (see \cite{AGKPS,P1,P2,PL2}). Complete algorithm is summarized as follows.

\begin{algorithm}
\begin{algorithmic}[1]
\Procedure{SM}{$\omega$}
\State identify permittivity $\eps_0$ and permeability $\mu_0$ of $\mathbb{R}^2$
\State given $\omega$, initialize $\mathbb{W}(\mathbf{r};\omega)$
\For{$q=1$ \textbf{to} $N$}
   \For{$p=1$ \textbf{to} $N$}
      \State collect MSR matrix data $F(\hat{\mr}_p,\md_q)\in\mathbb{M}$\Comment{see (\ref{AFS})}
   \EndFor
\EndFor
   \State perform SVD of $\mathbb{M}=\mathbb{US\overline{V}}^T$\Comment{see (\ref{SVD})}
   \State discriminate number of nonzero singular values $M$\Comment{see \cite{PL3}}
   \State choose $\{\mathbf{U}_1,\mathbf{U}_2,\cdots,\mathbf{U}_M\}$ and $\{\mathbf{V}_1,\mathbf{V}_2,\cdots,\mathbf{V}_M\}$
   \For{$\mathbf{r}\in\Omega\subset\mathbb{R}^{2}$}\Comment{$\Omega$ is a search domain}
      \State generate $\mathbf{\hat{D}}(\mr;\omega)$ and $\mathbf{W}(\mr;\omega)$\Comment{see (\ref{VecDhat}) and (\ref{VecW})}
      \State initialize $I(\mathbf{r})$
      \For{$m=1$ \textbf{to} $M$}
        \State $I(\mathbf{r})\gets I(\mathbf{r})+
              (\overline{\mathbf{W}}(\mathbf{r};\omega)\cdot\mathbf{U}_m(\omega)) (\overline{\mathbf{W}}(\mathbf{r};\omega)\cdot\overline{\mathbf{V}}_m(\omega))$
      \EndFor
      \State $\mathbb{W}(\mathbf{r};\omega)=|I(\mathbf{r})|$
    \EndFor\Comment{see (\ref{ImagingFunction})}
\State plot $\mathbb{W}(\mathbf{r};\omega)$
\State find $\mathbf{r}=\mathbf{r}_m\in\Sigma_m$\Comment{$\mathbb{W}(\mathbf{r};\omega)\approx1$}
\EndProcedure
\end{algorithmic}
\caption{Imaging algorithm via Subspace Migration (SM)}\label{LSA}
\end{algorithm}

\section{Structure analysis of imaging function}\label{sec:3}
\subsection{Structure of imaging function (\ref{ImagingFunction})}
We now determind the structure of imaging function (\ref{ImagingFunction}). For this purpose, we recall some useful statements.
\begin{lem}[\cite{AGKPS}]\label{Lemma1}
  A relation $A\sim B$ means that there exists a constant $C$ such that $A=CB$. Then, for vectors $\mathbf{U}_m$ and $\mathbf{V}_m$ in (\ref{SVD}) and $\mathbf{W}(\mr;\omega)$ in (\ref{VecW}), the following relationship holds
  \[\mathbf{W}(\mr_m;\omega)\sim\mathbf{U}_m(\omega)\quad\mbox{and}\quad\mathbf{W}(\mr_m;\omega)\sim\overline{\mathbf{V}}_m(\omega).\]
\end{lem}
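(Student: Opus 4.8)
The plan is to exploit the explicit rank-structure of the MSR matrix $\mathbb{M}$ that was derived just before the lemma, and to relate the span of the left/right singular vectors to the span of the vectors $\mathbf{D}(\mr_m;\omega)$. Since
\[\mathbb{M}=\sum_{m=1}^{M}\mathbf{D}(\mr_m;\omega)\mathbb{A}_m\mathbf{D}(\mr_m;\omega)^T,\]
with $\mathbb{A}_m$ the $3\times3$ block-diagonal matrix $\mathrm{diag}\big(\rho^2\tfrac{\eps_m-\eps_0}{\sqrt{\eps_0\mu_0}}\mathrm{area}(\mathcal{D}_m),\rho^2\mathbb{T}(\mr_m)\big)$, the range of $\mathbb{M}$ is contained in $\mathrm{span}\{\mathbf{D}(\mr_1;\omega),\dots,\mathbf{D}(\mr_M;\omega)\}$ (viewing each $\mathbf{D}(\mr_m;\omega)\in\mathbb{C}^{N\times3}$ as the span of its three columns), and similarly the range of $\mathbb{M}^T$. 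Under the standing assumptions (the $\mr_m$ are well separated and $N$ is large enough), these $3M$ columns are linearly independent, so $\mathbb{M}$ has rank $3M$; but in the imaging function only one singular vector $\mathbf{U}_m,\mathbf{V}_m$ per inclusion is effectively used, corresponding to the dominant singular value. I would therefore first set up the orthogonal projection $\mathbb{P}$ onto $\mathrm{Range}(\mathbb{M})=\mathrm{span}\{\mathbf{U}_1,\dots,\mathbf{U}_{M}\}$ (or $3M$, depending on the convention adopted here) and likewise $\mathbb{Q}$ onto $\mathrm{Range}(\mathbb{M}^T)$.

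The second step is to evaluate $\mathbf{W}(\mr;\omega)$ at a true location $\mr=\mr_m$. By definition $\mathbf{W}(\mr_m;\omega)$ is the normalized version of $\hat{\mathbf{D}}(\mr_m;\omega)$, whose $p$-th entry is $\mc\cdot(1,\md_p)^T\exp(jk_0\md_p\cdot\mr_m)$; this is precisely $\mathbf{D}(\mr_m;\omega)\mc$, i.e. a particular linear combination of the three columns of $\mathbf{D}(\mr_m;\omega)$. Hence $\hat{\mathbf{D}}(\mr_m;\omega)\in\mathrm{Range}(\mathbb{M})$ and $\overline{\hat{\mathbf{D}}}(\mr_m;\omega)\in\mathrm{Range}(\mathbb{M}^T)$, so $\mathbb{P}\mathbf{W}(\mr_m;\omega)=\mathbf{W}(\mr_m;\omega)$ up to the normalizing scalar. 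The key quantitative input — this is where I would lean on the hypotheses behind Lemma~\ref{Lemma1} as stated in \cite{AGKPS} — is that for well-separated inclusions the cross terms $\mathbf{D}(\mr_m;\omega)^*\mathbf{D}(\mr_{m'};\omega)$ for $m\ne m'$ are negligible, so that the SVD essentially block-diagonalizes: the dominant $\mathbf{U}_m$ spans (approximately) the range of $\mathbf{D}(\mr_m;\omega)\mathbb{A}_m$ alone, and similarly $\overline{\mathbf{V}}_m$. Combining this near-orthogonality of the blocks with the observation that $\hat{\mathbf{D}}(\mr_m;\omega)$ lies in the $m$-th block forces $\mathbf{W}(\mr_m;\omega)$ to be proportional to $\mathbf{U}_m(\omega)$, and, taking conjugates, to $\overline{\mathbf{V}}_m(\omega)$; the proportionality constant is exactly the $C$ in the relation $A\sim B$.

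The main obstacle, and the place where care is needed, is the step that goes from ``the ranges coincide'' to ``the individual singular vectors are parallel to the individual $\mathbf{W}(\mr_m;\omega)$.'' A priori, the SVD only pins down $\mathbf{U}_m$ up to rotations within the degenerate (or near-degenerate) singular subspace, and when $\mathbb{A}_m$ is a genuine $3\times3$ block the $m$-th inclusion contributes a three-dimensional subspace, not a line; the claimed relation $\mathbf{W}(\mr_m;\omega)\sim\mathbf{U}_m(\omega)$ is only literally true when that block is effectively rank one (e.g. the permittivity/permeability contrast makes one direction dominate, or one restricts to the leading singular vector). I would handle this by invoking the precise hypotheses and the asymptotic ($\rho\to0$, small separation, large $N$) regime of \cite{AGKPS} under which the relevant block is dominated by a single singular direction aligned with $\mathbf{D}(\mr_m;\omega)\mc$ for the particular $\mc$ used in \eqref{VecDhat}; modulo that regime, the argument reduces to the elementary linear-algebra facts above. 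Finally I would record that the statement is understood in the approximate sense (the $\approx$ that pervades this section), with the error controlled by the off-diagonal block norms and the $o(\rho^2)$ remainder in \eqref{ScatteredField}.
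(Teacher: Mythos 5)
The paper does not prove this lemma at all: it is imported verbatim from \cite{AGKPS} and used as a black box, so there is no in-paper argument to compare yours against. Judged on its own terms, your reconstruction follows the standard route for results of this type --- observing that $\hat{\mathbf{D}}(\mr_m;\omega)=\mathbf{D}(\mr_m;\omega)\mc$ lies in the span of the columns of $\mathbf{D}(\mr_m;\omega)$, hence in the range of the $m$-th block of $\mathbb{M}$, and then using the asymptotic orthogonality of the cross-Gram matrices $\overline{\mathbf{D}}(\mr_m;\omega)^T\mathbf{D}(\mr_{m'};\omega)$ for $m\ne m'$ when $N$ is large and the inclusions are well separated --- and you have correctly put your finger on the one genuine subtlety: under the standing assumptions $\eps_m>\eps_0$ and $\mu_m>\mu_0$ the block $\mathbb{A}_m$ has full rank, so each inclusion contributes a (generically) three-dimensional singular subspace, whereas the lemma pairs each inclusion with a \emph{single} singular vector and a single direction $\mathbf{D}(\mr_m;\omega)\mc$ determined by an essentially arbitrary $\mc\in\mathbb{C}^{3\times1}\backslash\set{\mathbf{0}}$. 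That identification holds only approximately, when one contrast dominates (which is why the numerical section takes $\mc=(5,1,1)^T$ with a heavy first component), and the present paper glosses over this both in the lemma statement and in the proof of Theorem \ref{TheoremSingle}, where $\overline{\mathbf{W}}(\mr;\omega)\cdot\mathbf{U}_m(\omega)$ is silently replaced by a plain exponential sum with the $\mc\cdot(1,\md_p)^T$ weights dropped. So there is no gap in your proposal relative to the paper (which proves nothing here), and the caveat you flag is exactly the right one; to make the argument airtight you would need to state the rank-one-dominance (or dominant-contrast) hypothesis explicitly rather than deferring it to \cite{AGKPS}.
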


\begin{lem}[\cite{G}]\label{Lemma2}
  Let $\md,\mr\in\mathbb{R}^2$, and $\omega>0$; then
  \[\int_{\mathfrak{C}^1}\exp(j\omega\md\cdot\mr)dS(\md)=2\pi J_0(\omega|\mr|),\]
  where $J_\nu(x)$ denotes the Bessel function of order $\nu$ of the first kind.
\end{lem}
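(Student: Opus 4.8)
The plan is to reduce the circle integral to a one-dimensional integral and then recognise it as a standard integral representation of $J_0$. First I would parametrise $\mathfrak{C}^1$ by $\md=\md(\theta)=(\cos\theta,\sin\theta)$ with $\theta\in[0,2\pi)$, so that $dS(\md)=d\theta$, and write $\mr$ in polar form as $\mr=|\mr|(\cos\phi,\sin\phi)$. Then $\md\cdot\mr=|\mr|\cos(\theta-\phi)$, and the left-hand side becomes $\int_0^{2\pi}\exp\!\big(j\omega|\mr|\cos(\theta-\phi)\big)\,d\theta$. Because the integrand is $2\pi$-periodic in $\theta$, the substitution $\psi=\theta-\phi$ does not change the value of the integral, and we are left with $\int_0^{2\pi}\exp\!\big(j\omega|\mr|\cos\psi\big)\,d\psi$; note that this already shows the quantity depends on $\mr$ only through $|\mr|$, as the right-hand side demands.

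The second step is to identify this reduced integral with $2\pi J_0(\omega|\mr|)$. The cleanest route is to invoke the Jacobi--Anger expansion
\[\exp(jx\cos\psi)=\sum_{n=-\infty}^{\infty} j^{\,n}J_n(x)\exp(jn\psi),\]
which converges uniformly in $\psi$ for each fixed real $x$, and integrate term by term over $\psi\in[0,2\pi)$: every term with $n\neq0$ integrates to zero, leaving only $2\pi J_0(x)$ with $x=\omega|\mr|$. Equivalently, one may simply cite the classical formula $J_0(x)=\frac{1}{2\pi}\int_0^{2\pi}\exp(jx\cos\psi)\,d\psi$, which is exactly the reduced integral divided by $2\pi$; this is the form in which the identity appears in \cite{G}.

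There is no real obstacle here; the only point needing care is the justification of the term-by-term integration, which follows from the uniform convergence of the Jacobi--Anger series (a consequence of the bound $|J_n(x)|\le (|x|/2)^{|n|}/|n|!$). If a fully self-contained derivation is preferred, one can instead expand $\exp(jx\cos\psi)=\sum_{k\ge0}\frac{(jx)^k}{k!}\cos^k\psi$, use $\int_0^{2\pi}\cos^k\psi\,d\psi=0$ for odd $k$ and $\int_0^{2\pi}\cos^{2\ell}\psi\,d\psi=2\pi\binom{2\ell}{\ell}4^{-\ell}$, and observe that the surviving terms sum to $2\pi\sum_{\ell\ge0}\frac{(-1)^\ell}{(\ell!)^2}(x/2)^{2\ell}=2\pi J_0(x)$, recovering the identity directly from the power series of $J_0$.
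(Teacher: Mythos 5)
Your argument is correct. Note, however, that the paper itself offers no proof of this lemma: it is quoted verbatim from the cited reference of Griesmaier and used as a black box in the proof of Theorem 3.3, so there is no internal argument to compare against. Your derivation is the standard one and is complete: the reduction of the surface integral over $\mathfrak{C}^1$ to $\int_0^{2\pi}\exp(j\omega|\mr|\cos\psi)\,d\psi$ via the rotation-invariance of the circle is exactly right, and either of your two routes to identifying this with $2\pi J_0(\omega|\mr|)$ — the Jacobi--Anger expansion with term-by-term integration (justified by the uniform convergence you note), or the direct power-series computation using $\int_0^{2\pi}\cos^{2\ell}\psi\,d\psi=2\pi\binom{2\ell}{\ell}4^{-\ell}$ — is sound; indeed the second is just a proof of Bessel's integral representation $J_0(x)=\frac{1}{2\pi}\int_0^{2\pi}\exp(jx\cos\psi)\,d\psi$, which is the form in which one would most naturally cite the identity. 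Your self-contained version is therefore strictly more informative than what the paper provides.
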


Subsequently, we can explore the structure of (\ref{ImagingFunction}) as follows
\begin{thm}\label{TheoremSingle}
  If the total number of incident and observation directions $N$ is sufficiently large and satisfies $N>M$, then the imaging function (\ref{ImagingFunction}) can be represented as follows:
  \begin{equation}\label{ImagingFunctionSingle}
    \mathbb{W}(\mathbf{r};\omega)\sim\sum_{m=1}^{M}J_0^2(\omega|\mathbf{r}_m-\mathbf{r}|).
  \end{equation}
\end{thm}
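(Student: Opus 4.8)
The plan is to use Lemma~\ref{Lemma1} to replace each $\mathbf{U}_m(\omega)$ and $\overline{\mathbf{V}}_m(\omega)$ in \eqref{ImagingFunction} by the explicit unit vector $\mathbf{W}(\mathbf{r}_m;\omega)$, since $A\sim B$ means $A=CB$ and both sides here are unit vectors, so the proportionality constant has unit modulus and disappears under the outer absolute value. This turns \eqref{ImagingFunction} into
\[
\mathbb{W}(\mathbf{r};\omega)\sim\left|\sum_{m=1}^{M}\bigl(\overline{\mathbf{W}}(\mathbf{r};\omega)\cdot\mathbf{W}(\mathbf{r}_m;\omega)\bigr)^2\right|,
\]
so the whole problem reduces to understanding the inner product of two vectors of the form \eqref{VecDhat}–\eqref{VecW}.

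Next I would compute $\overline{\mathbf{\hat{D}}}(\mathbf{r};\omega)\cdot\mathbf{\hat{D}}(\mathbf{r}_m;\omega)$ directly from \eqref{VecDhat}: it equals $\sum_{p=1}^{N}|\mathbf{c}\cdot(1,\mathbf{d}_p)^T|^2\exp\bigl(jk_0\mathbf{d}_p\cdot(\mathbf{r}_m-\mathbf{r})\bigr)$. The key step is to invoke the hypothesis that $N$ is large: the set $\{\mathbf{d}_p\}$ then densely samples the unit circle $\mathfrak{C}^1$, so the sum is (up to the normalization $N/2\pi$) a Riemann sum for the integral $\int_{\mathfrak{C}^1}|\mathbf{c}\cdot(1,\boldsymbol{\theta})^T|^2\exp\bigl(jk_0\boldsymbol{\theta}\cdot(\mathbf{r}_m-\mathbf{r})\bigr)\,dS(\boldsymbol{\theta})$. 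Writing $|\mathbf{c}\cdot(1,\boldsymbol{\theta})^T|^2$ out in components and noting that the leading contribution is the constant term $|c_1|^2$ (the terms linear in $\boldsymbol{\theta}$ integrate against the exponential to give lower-order Bessel contributions $J_1$, $J_2$, which one argues are negligible or absorbed into the $\sim$ relation), Lemma~\ref{Lemma2} gives $\overline{\mathbf{\hat{D}}}(\mathbf{r};\omega)\cdot\mathbf{\hat{D}}(\mathbf{r}_m;\omega)\sim 2\pi J_0(\omega|\mathbf{r}_m-\mathbf{r}|)$ (recall $k_0=\omega$). In particular, taking $\mathbf{r}=\mathbf{r}_m$ gives $|\mathbf{\hat{D}}(\mathbf{r}_m;\omega)|^2\sim 2\pi J_0(0)=2\pi$, so after dividing by the norms in \eqref{VecW} we obtain $\overline{\mathbf{W}}(\mathbf{r};\omega)\cdot\mathbf{W}(\mathbf{r}_m;\omega)\sim J_0(\omega|\mathbf{r}_m-\mathbf{r}|)$. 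Substituting into the reduced expression above, squaring, and observing that the cross terms between distinct $m$ and $m'$ are of lower order when the inclusions are well separated (each off-diagonal term carries an extra oscillatory factor $\exp(jk_0\mathbf{d}_p\cdot(\mathbf{r}_m-\mathbf{r}_{m'}))$ that suppresses it), yields $\mathbb{W}(\mathbf{r};\omega)\sim\sum_{m=1}^{M}J_0^2(\omega|\mathbf{r}_m-\mathbf{r}|)$, which is \eqref{ImagingFunctionSingle}.

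The main obstacle I anticipate is making the passage from the discrete sum over $\{\mathbf{d}_p\}$ to the integral over $\mathfrak{C}^1$ rigorous in a way that honestly tracks what is "negligible." Two sub-issues live here: first, the $\boldsymbol{\theta}$-linear part of $|\mathbf{c}\cdot(1,\boldsymbol{\theta})^T|^2$ does not simply vanish — it produces genuine $J_1$-type terms, and one must argue (as in the cited reference \cite{AGKPS}) that for generic or real $\mathbf{c}$ these are dominated by the $J_0$ term, or else restrict $\mathbf{c}$ so they drop out; second, the cross terms $m\ne m'$ need the separation-of-inclusions assumption to be suppressed, which is implicit rather than stated in the theorem. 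I would handle both by stating them as the standing assumptions under which the $\sim$ relation is understood, mirroring the level of rigor already adopted in the paper's discussion of \eqref{ImagingFunction}, and then the remaining computation is just Lemma~\ref{Lemma1}, Lemma~\ref{Lemma2}, and bookkeeping.
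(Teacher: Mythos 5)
Your proposal follows essentially the same route as the paper: Lemma \ref{Lemma1} to replace the singular vectors by $\mathbf{W}(\mathbf{r}_m;\omega)$, recognition of the resulting inner products as Riemann sums over $\mathfrak{C}^1$ for large $N$, and Lemma \ref{Lemma2} to produce $J_0$ --- if anything you are more careful than the paper, which silently drops the weights $|\mathbf{c}\cdot(1,\md_p)^T|^2$ that you explicitly argue contribute only lower-order Bessel terms. One small correction: your worry about cross terms between $m\ne m'$ is moot, since the sum over $m$ in (\ref{ImagingFunction}) is already diagonal --- each summand involves only the pair $\mathbf{U}_m,\mathbf{V}_m$ for a single $m$ and reduces to the nonnegative quantity $J_0^2(\omega|\mathbf{r}_m-\mathbf{r}|)$, so no separation-of-inclusions assumption is needed at that step.
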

\begin{proof}
By hypothesis, we assume that $N$ is sufficiently large. For simplicity, we set $\triangle\md_p:=|\md_p-\md_{p-1}|$ for $p=2,3,\cdots,N,$ and $\triangle\md_1:=|\md_1-\md_N|$. $\triangle\md_q$ is defined analogously. Then, applying Lemmas \ref{Lemma1} and \ref{Lemma2} yields
\begin{align}
\begin{aligned}\label{BesselFunction}
  \mathbb{W}(\mathbf{r};\omega)&=\left|\sum_{m=1}^{M}\bigg(\overline{\mathbf{W}}(\mathbf{r};\omega)\cdot\mathbf{U}_m(\omega)\bigg) \bigg(\overline{\mathbf{W}}(\mathbf{r};\omega)\cdot\overline{\mathbf{V}}_m(\omega)\bigg)\right|\\
  &\sim\left|\sum_{m=1}^{M}\left(\sum_{p=1}^{N}\exp(j\omega\md_p\cdot(\mathbf{r}_m-\mathbf{r}))\frac{\triangle\md_p}{2\pi}\right) \left(\sum_{q=1}^{N}\exp(j\omega\md_q\cdot(\mathbf{r}_m-\mathbf{r}))\frac{\triangle\md_q}{2\pi}\right)\right|\\
  &\approx\frac{1}{4\pi^2}\left|\sum_{m=1}^{M}\left(\int_{\mathfrak{C}^1}\exp(j\omega\md\cdot(\mathbf{r}_m-\mathbf{r}))dS(\md)\right)^2\right| =\sum_{m=1}^{M}J_0^2(\omega|\mathbf{r}_m-\mathbf{r}|).
\end{aligned}
\end{align}
This completes the proof.
\end{proof}

Note that $J_0(x)$ has the maximum value $1$ at $x=0$. This is the reason why the map of $\mathbb{W}(\mathbf{r};\omega)$ plots magnitude $1$ at $\mathbf{r}=\mathbf{r}_m\in\Sigma_m$. Moreover, due to the oscillating property of $J_0(x)$, Theorem \ref{TheoremSingle} indicates why imaging function (\ref{ImagingFunction}) plots unexpected replicas, as shown in Figure \ref{FigureBessel}.

\subsection{Reason behind enhancement in the imaging performance by applying multiple frequencies}
According to the Theorem \ref{TheoremSingle}, the oscillating pattern of the Bessel function must be reduced or eliminated in order to improve the imaging performance. One way to do so is to apply the high-frequency $\omega=+\infty$ in theory. Another way is to apply several frequencies to the imaging function (\ref{ImagingFunction}) as follows:
\begin{equation}\label{ImagingFunctionMultiple}
  \mathbb{W}(\mathbf{r};S):=\frac{1}{S}\left|\sum_{s=1}^{S}\mathbb{W}(\mathbf{r};\omega_s)\right| =\frac{1}{S}\left|\sum_{s=1}^{S}\sum_{m=1}^{M}\bigg(\overline{\mathbf{W}}(\mathbf{r};\omega_s)\cdot\mathbf{U}_m(\omega_s)\bigg) \bigg(\overline{\mathbf{W}}(\mathbf{r};\omega_s)\cdot\overline{\mathbf{V}}_m(\omega_s)\bigg)\right|.
\end{equation}

Several researches in \cite{A,AGKPS,P1,P2} have confirmed on the basis of Statistical Hypothesis Testing and numerical experiments that the multi-frequency imaging function (\ref{ImagingFunctionMultiple}) is an improved version of the single-frequency version (\ref{ImagingFunctionSingle}). The reason for this is discussed as follows.

\begin{thm}
  If $\omega_S$ and the total number of incident and observation directions $N$ is sufficiently large and satisfies $N>M$, then the structure of the imaging function (\ref{ImagingFunctionMultiple}) is
  \begin{multline}\label{StructureImagingFunctionMultiple}
    \mathbb{W}(\mathbf{r};S)\sim\left|\sum_{m=1}^{M}\frac{\omega_S}{\omega_S-\omega_1}\bigg(J_0^2(\omega_S|\mathbf{r}_m-\mathbf{r}|)+J_1^2(\omega_S|\mathbf{r}_m-\mathbf{r}|)\bigg)\right.\\ \left.-\frac{\omega_1}{\omega_S-\omega_1}\bigg(J_0^2(\omega_1|\mathbf{r}_m-\mathbf{r}|)+J_1^2(\omega_1|\mathbf{r}_m-\mathbf{r}|)\bigg)\right|.
  \end{multline}
\end{thm}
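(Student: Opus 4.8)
The plan is to start from the single–frequency identity already obtained in Theorem \ref{TheoremSingle}, namely $\mathbb{W}(\mathbf{r};\omega_s)\sim\sum_{m=1}^{M}J_0^2(\omega_s|\mathbf{r}_m-\mathbf{r}|)$, and substitute it into the definition \eqref{ImagingFunctionMultiple}. Up to the constant hidden in $\sim$, this gives
\[
\mathbb{W}(\mathbf{r};S)\sim\frac{1}{S}\left|\sum_{m=1}^{M}\sum_{s=1}^{S}J_0^2(\omega_s|\mathbf{r}_m-\mathbf{r}|)\right|.
\]
The key idea is then to treat the inner sum over $s$ as a Riemann sum: assuming the sampling frequencies $\omega_1<\omega_2<\cdots<\omega_S$ are taken densely in $[\omega_1,\omega_S]$ (this is what ``$\omega_S$ sufficiently large'' together with a large number of samples should be read to mean, with $\triangle\omega_s:=\omega_s-\omega_{s-1}$ small), one writes $\frac{1}{S}\sum_{s=1}^S(\cdots)\approx\frac{1}{\omega_S-\omega_1}\int_{\omega_1}^{\omega_S}(\cdots)\,d\omega$, so that
\[
\mathbb{W}(\mathbf{r};S)\sim\left|\sum_{m=1}^{M}\frac{1}{\omega_S-\omega_1}\int_{\omega_1}^{\omega_S}J_0^2(\omega|\mathbf{r}_m-\mathbf{r}|)\,d\omega\right|.
\]

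The heart of the argument is the evaluation of $\int J_0^2(\omega t)\,d\omega$ where $t:=|\mathbf{r}_m-\mathbf{r}|$. Substituting $x=\omega t$ reduces this to $\frac{1}{t}\int J_0^2(x)\,dx$, and the relevant antiderivative is the classical identity
\[
\int_0^a J_0^2(x)\,dx = a\bigl(J_0^2(a)+J_1^2(a)\bigr)+\int_0^a\frac{J_1^2(x)}{x}\,dx,
\]
or, more usefully in the form that appears in the statement, the indefinite integral
\[
\int x\bigl(J_0^2(x)+J_1^2(x)\bigr)'\,dx\quad\text{combined with}\quad \frac{d}{dx}\Bigl[x\bigl(J_0^2(x)+J_1^2(x)\bigr)\Bigr]=2J_0^2(x),
\]
which follows from the recurrence and derivative relations $J_0'=-J_1$ and $J_1'=J_0-J_1/x$. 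Granting this, $\int_{\omega_1}^{\omega_S}J_0^2(\omega t)\,d\omega=\frac{1}{t}\cdot\frac{1}{2}\bigl[x(J_0^2(x)+J_1^2(x))\bigr]_{x=\omega_1 t}^{x=\omega_S t}=\frac{1}{2}\bigl[\omega_S(J_0^2(\omega_S t)+J_1^2(\omega_S t))-\omega_1(J_0^2(\omega_1 t)+J_1^2(\omega_1 t))\bigr]$. Dividing by $\omega_S-\omega_1$, absorbing the harmless factor $\tfrac12$ into the $\sim$ constant, and summing over $m$ produces exactly \eqref{StructureImagingFunctionMultiple}.

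So the skeleton of the proof is: (i) quote Theorem \ref{TheoremSingle} to replace each $\mathbb{W}(\mathbf{r};\omega_s)$ by $\sum_m J_0^2(\omega_s|\mathbf{r}_m-\mathbf{r}|)$; (ii) pass from the average over $s$ to the normalized integral $\frac{1}{\omega_S-\omega_1}\int_{\omega_1}^{\omega_S}\cdots\,d\omega$ via a Riemann–sum approximation, justified by the assumption that $\omega_S$ and the sampling are fine; (iii) compute the integral using the antiderivative identity $\frac{d}{dx}[x(J_0^2(x)+J_1^2(x))]=2J_0^2(x)$; (iv) rescale and collect terms. I expect step (ii) — making precise the sense in which the discrete multi-frequency average is approximated by the frequency integral, and in particular what ``$\omega_S$ sufficiently large'' buys us beyond merely ``many frequencies'' — to be the main soft spot: the paper is clearly content with a formal Riemann-sum heuristic analogous to the one used in \eqref{BesselFunction}, so I would mirror that level of rigor, stating the $\triangle\omega_s$ notation and invoking the Riemann-sum limit, rather than tracking error terms. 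The one genuinely computational ingredient, the antiderivative of $J_0^2$, I would either cite from a standard reference on Bessel functions or verify in one line by differentiating $x(J_0^2+J_1^2)$ using $J_0'=-J_1$ and the recurrence $J_1'=J_0-x^{-1}J_1$.
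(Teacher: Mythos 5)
Your steps (i) and (ii) coincide with the paper's: the authors likewise substitute the single-frequency result (\ref{BesselFunction}) and pass, with exactly the level of informality you anticipate, to $\frac{1}{\omega_S-\omega_1}\int_{\omega_1}^{\omega_S}J_0^2(\omega|\mathbf{r}_m-\mathbf{r}|)\,d\omega$. The gap is in step (iii): the antiderivative identity you rely on is false. Using $J_0'=-J_1$ and $J_1'=J_0-x^{-1}J_1$ one gets
\[
\frac{d}{dx}\Bigl[x\bigl(J_0^2(x)+J_1^2(x)\bigr)\Bigr]=J_0^2(x)+J_1^2(x)+2x\bigl(-J_0J_1+J_0J_1-x^{-1}J_1^2\bigr)=J_0^2(x)-J_1^2(x),
\]
not $2J_0^2(x)$. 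Consequently the correct evaluation is
\[
\int_{\omega_1}^{\omega_S}J_0^2(\omega t)\,d\omega=\frac{1}{t}\Bigl[x\bigl(J_0^2(x)+J_1^2(x)\bigr)\Bigr]_{\omega_1 t}^{\omega_S t}+\int_{\omega_1}^{\omega_S}J_1^2(\omega t)\,d\omega,
\]
with $t=|\mathbf{r}_m-\mathbf{r}|$, and there is a genuine residual term $\Lambda:=\int_{\omega_1}^{\omega_S}J_1^2(\omega t)\,d\omega$ that your argument makes disappear for free. (Your alternative formula with $\int_0^a J_1^2(x)/x\,dx$ also has a spurious $1/x$.)

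Showing that $\Lambda$ is negligible is in fact the bulk of the paper's proof, and it is not a cosmetic point: a priori $\Lambda$ could be of the same order $O(\omega_S)$ as the boundary term, which would change the asserted structure (\ref{StructureImagingFunctionMultiple}). The authors handle it by splitting into the regimes $\omega_S|\mathbf{r}_m-\mathbf{r}|\ll\sqrt{2}$ and $\omega_S|\mathbf{r}_m-\mathbf{r}|\gg\sqrt{2}$, using the bound $J_1\leq 1/\sqrt{2}$ together with the small-argument asymptotics $J_1(x)\approx x/2$ in the first case, and $\int J_1(x)\,dx=-J_0(x)$ in the second, to conclude $\Lambda\ll O(\omega_S)$ while the boundary term is $O(\omega_S)$. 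To repair your proposal you should replace the claimed identity $\frac{d}{dx}[x(J_0^2+J_1^2)]=2J_0^2$ by the correct one and then supply an estimate of this kind for $\Lambda$; as written, the computational heart of the proof rests on an identity that a one-line differentiation refutes.
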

\begin{proof}
According to (\ref{BesselFunction}), we can observe that
\[\mathbb{W}(\mathbf{r};S)\approx\frac{1}{S}\left|\sum_{s=1}^{S}\sum_{m=1}^{M}J_0^2(\omega_s|\mathbf{r}_m-\mathbf{r}|)\right| \approx\frac{1}{\omega_S-\omega_1}\left|\sum_{m=1}^{M}\int_{\omega_1}^{\omega_S}J_0^2(\omega|\mathbf{r}_m-\mathbf{r}|)d\omega\right|.\]
Using this, we apply an indefinite integral formula of the Bessel function (see \cite[page 35]{R}):
\[\int J_0^2(x)=x\bigg(J_0^2(x)+J_1^2(x)\bigg)+\int J_1^2(x)dx\]
in addition to a change of variable $\omega|\mathbf{r}_m-\mathbf{r}|=x$. This yields
\begin{align*}
  \int_{\omega_1}^{\omega_S}J_0^2(\omega|\mathbf{r}_m-\mathbf{r}|)d\omega =&\frac{1}{|\mathbf{r}_m-\mathbf{r}|}\int_{\omega_1|\mathbf{r}_m-\mathbf{r}|}^{\omega_S|\mathbf{r}_m-\mathbf{r}|}J_0^2(x)dx\\
  =&\omega_S\bigg(J_0^2(\omega_S|\mathbf{r}_m-\mathbf{r}|)+J_1^2(\omega_S|\mathbf{r}_m-\mathbf{r}|)\bigg)\\
  &-\omega_1\bigg(J_0^2(\omega_1|\mathbf{r}_m-\mathbf{r}|)+J_1^2(\omega_1|\mathbf{r}_m-\mathbf{r}|)\bigg)
  +\int_{\omega_1}^{\omega_S}J_1^2(\omega|\mathbf{r}_m-\mathbf{r}|)d\omega.
\end{align*}
Now, we consider the upper bound of
\[\Lambda(|\mathbf{r}_m-\mathbf{r}|,\omega_1,\omega_S):=\int_{\omega_1}^{\omega_S}J_1^2(\omega|\mathbf{r}_m-\mathbf{r}|)d\omega.\]
Note that since $J_1(0)=0$, let us assume that $|\mr_m-\mr|\ne0$ and $0<\omega_S|\mr_m-\mr|\ll\sqrt{2}$. Then applying asymptotic behavior
\[J_\nu(x)\approx\frac{1}{\Gamma(\nu+1)}\bigg(\frac{x}{2}\bigg)^\nu\]
and boundedness property $J_\nu(\omega|\mathbf{r}_m-\mathbf{r}|)\leq\frac{1}{\sqrt{2}}$ yields
\begin{align*}
  \int_{\omega_1}^{\omega_S}J_1^2(\omega|\mathbf{r}_m-\mathbf{r}|)d\omega
  &\leq \frac{1}{\sqrt{2}}\int_{\omega_1}^{\omega_S}J_1(\omega|\mathbf{r}_m-\mathbf{r}|)d\omega
  =\frac{1}{\sqrt{2}}\int_{\omega_1|\mathbf{r}_m-\mathbf{r}|}^{\omega_S|\mathbf{r}_m-\mathbf{r}|}\frac{J_1(x)}{|\mathbf{r}_m-\mathbf{r}|}dx\\
  &=\frac{1}{2\sqrt{2}|\mathbf{r}_m-\mathbf{r}|}\int_{\omega_1|\mathbf{r}_m-\mathbf{r}|}^{\omega_S|\mathbf{r}_m-\mathbf{r}|}xdx =\frac{(\omega_S)^2-(\omega_1)^2}{4\sqrt{2}}|\mathbf{r}_m-\mathbf{r}|\\
  &<\frac{\omega_S}{4\sqrt{2}}\bigg(\omega_S|\mathbf{r}_m-\mathbf{r}|\bigg)\ll\frac{\omega_S}{4}=O(\omega_S).
\end{align*}

Now, assume that $\omega_S$ satisfies
\[\omega_S|\mathbf{r}_m-\mathbf{r}|\gg\sqrt{2}\quad\mbox{i.e.,}\quad|\mathbf{r}_m-\mathbf{r}|\gg\frac{\sqrt{2}}{\omega_S}>0.\]
Then since
\[\int J_1(x)dx=-J_0(x),\]
we can obtain
\begin{align*}
  \int_{\omega_1}^{\omega_S}J_1(\omega|\mathbf{r}_m-\mathbf{r}|)d\omega &=\int_{\omega_1|\mathbf{r}_m-\mathbf{r}|}^{\omega_S|\mathbf{r}_m-\mathbf{r}|}\frac{J_1(x)}{|\mathbf{r}_m-\mathbf{r}|}dx
  =\frac{1}{|\mathbf{r}_m-\mathbf{r}|}\bigg[-J_0(x)\bigg]_{\omega_1|\mathbf{r}_m-\mathbf{r}|}^{\omega_S|\mathbf{r}_m-\mathbf{r}|}\\
  &=\frac{1}{|\mathbf{r}_m-\mathbf{r}|}\bigg(J_0(\omega_1|\mathbf{r}_m-\mathbf{r}|)-J_0(\omega_S|\mathbf{r}_m-\mathbf{r}|)\bigg)\\
  &\leq\frac{2}{|\mathbf{r}_m-\mathbf{r}|}\ll\sqrt{2}\omega_S.
\end{align*}
Therefore, the term $\Lambda(|\mathbf{r}_m-\mathbf{r}|,\omega_1,\omega_S)$ can be disregarded because
\begin{multline*}
  \left|\frac{\omega_S}{\omega_S-\omega_1}\bigg(J_0^2(\omega_S|\mathbf{r}_m-\mathbf{r}|)+J_1^2(\omega_S|\mathbf{r}_m-\mathbf{r}|)\bigg)\right.\\
  \left.-\frac{\omega_1}{\omega_S-\omega_1}\bigg(J_0^2(\omega_1|\mathbf{r}_m-\mathbf{r}|)+J_1^2(\omega_1|\mathbf{r}_m-\mathbf{r}|)\bigg)\right|=O(\omega_S)
\end{multline*}
and $\Lambda(|\mathbf{r}_m-\mathbf{r}|,\omega_1,\omega_S)\ll O(\omega_S)$. Hence we can obtain (\ref{StructureImagingFunctionMultiple}). This completes the proof.
\end{proof}

Two-dimensional plot for (\ref{StructureImagingFunctionMultiple}) is shown in Figure \ref{FigureBessel}. This shows that (\ref{ImagingFunctionMultiple}) yields better images owing to less oscillation than (\ref{ImagingFunctionSingle}) does. This result indicates why a multi-frequency based imaging function offers better results than a single-frequency based one.

\begin{figure}
\begin{center}
\subfigure[$\omega=\frac{2\pi}{0.3}$]{\includegraphics[width=0.49\textwidth]{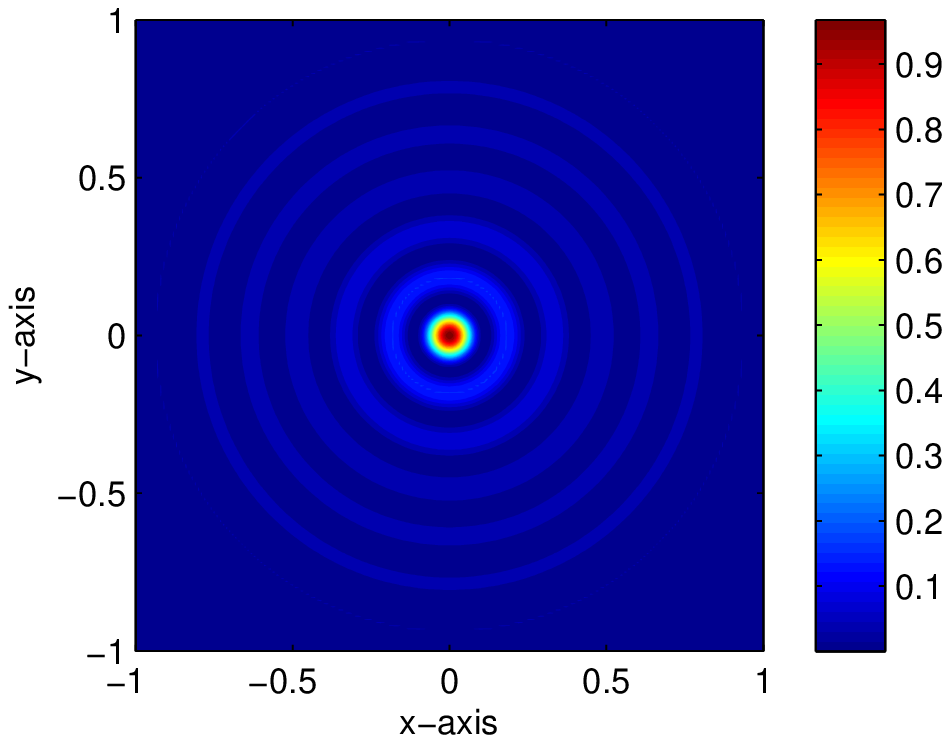}}
\subfigure[]{\includegraphics[width=0.49\textwidth]{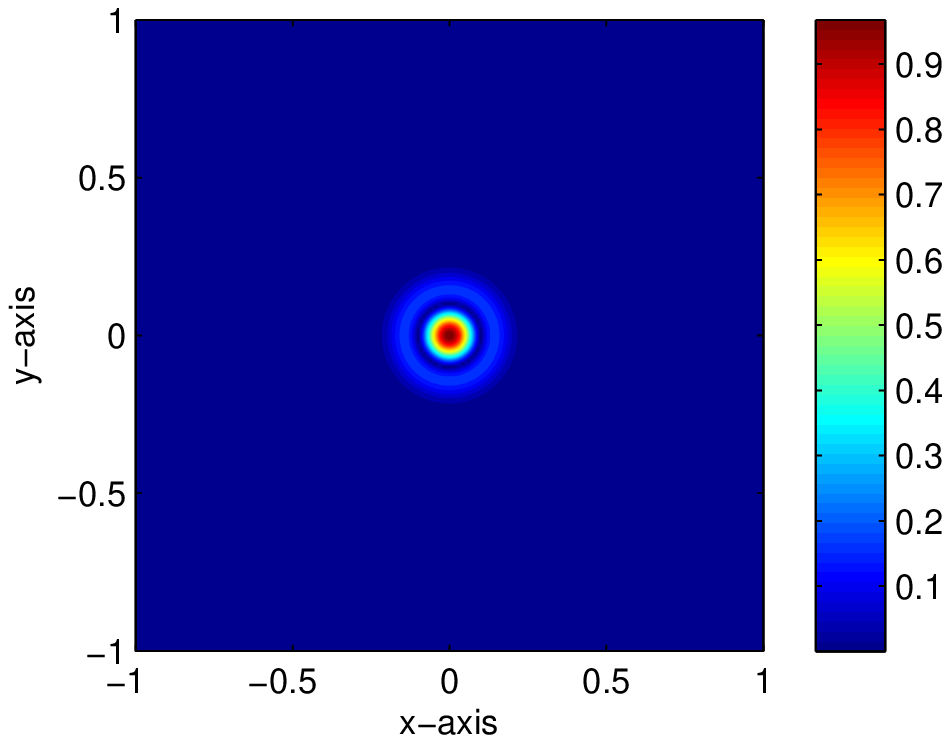}}
\end{center}
\caption{(a) 2-D plot of (\ref{ImagingFunctionSingle}) and (b) 2-D plot of (\ref{ImagingFunctionMultiple}) when $m=1$ and $\mr_m=\mathbf{0}$.}
\label{FigureBessel}
\end{figure}

\section{Numerical experiments and discussions}\label{sec:4}
In this section, we describe the numerical experiments we conducted to validate our analysis. For this purpose, we choose a set of three different small disks $\Sigma_m$. The common radii $\rho$ of $\Sigma_m$ are set to $0.1$, and parameters $\eps_0$ and $\mu_0$ are chosen as $1$. Locations $\mr_m$ of $\Sigma_m$ are selected as $\mr_1=(0.4,0)$, $\mr_2=(-0.6,0.3)$, and $\mr_3=(0.1,-0.5)$. For a given wavelength $\lambda_s$, each frequency is selected as $\omega_s=\frac{2\pi}{\lambda_s}$, for $s=1,2,\cdots,S$. Note that the test vector $\mc$ in (\ref{VecDhat}) is selected as $\mc=(5,1,1)^T$ and all the wavelengths $\lambda_s$ are uniformly distributed in the interval $[\lambda_1,\lambda_S]$. The observation directions $\md_p$ are selected as
\[\md_p=\left(\cos\frac{2\pi p}{N},\sin\frac{2\pi p}{N}\right)\quad\mbox{for}\quad p=1,2,\cdots,N,\]
and the incident directions $\md_q\in\mathfrak{C}^1$ are selected analogously.

In all the examples, scattered field data computed within the framework of the Foldy-Lax equation \cite{TKDA}. Then, a white Gaussian noise with $10$ dB signal-to-noise ratio (SNR) is added to the unperturbed data in order to exhibit the robustness of the proposed algorithm via the MATLAB command \textit{awgn}. In order to obtain the number of nonzero singular values $M$ for each frequency $\omega_s$, a $0.01$-threshold scheme is adopted (see \cite{PL1,PL3}). The search domain $\Omega\subset\mathbb{R}^2$ is selected as a square $\Omega=[-1,1]\times[-1,1]$.

Figure \ref{Result1} shows the map of $\mathbb{W}(\mathbf{r};10)$ via the MSR matrix $\mathbb{M}$ for $N=20$ and $S=10$ and different frequencies with $\lambda_1=0.5$ and $\lambda_S=0.3$. On the left-hand side of Figure \ref{Result1}, we set the same material properties $\eps_m\equiv5$ and $\mu_m\equiv5$, $m=1,2,3$. As expected, locations of $\Sigma_m$ can be clearly identified. On the right-hand side of Figure \ref{Result1}, we set different material properties $\eps_1=\mu_1=5$, $\eps_2=\mu_2=2$, and $\eps_3=\mu_3=7$. Note that due to the small values of $\eps_2$ and $\mu_2$, the map of $\mathbb{W}(\mathbf{r};10)$ plots a small magnitude at $\mr_2\in\Sigma_2$ but the locations of all $\Sigma_m$ are well identified.

\begin{figure}
\begin{center}
\subfigure[same material property]{\includegraphics[width=0.49\textwidth]{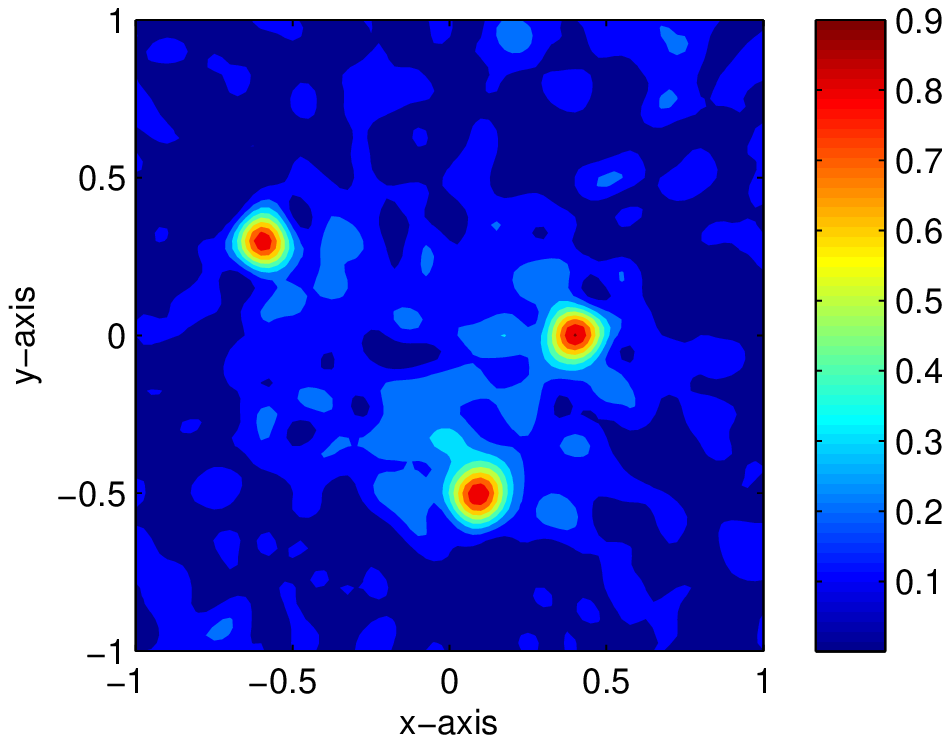}}
\subfigure[different material property]{\includegraphics[width=0.49\textwidth]{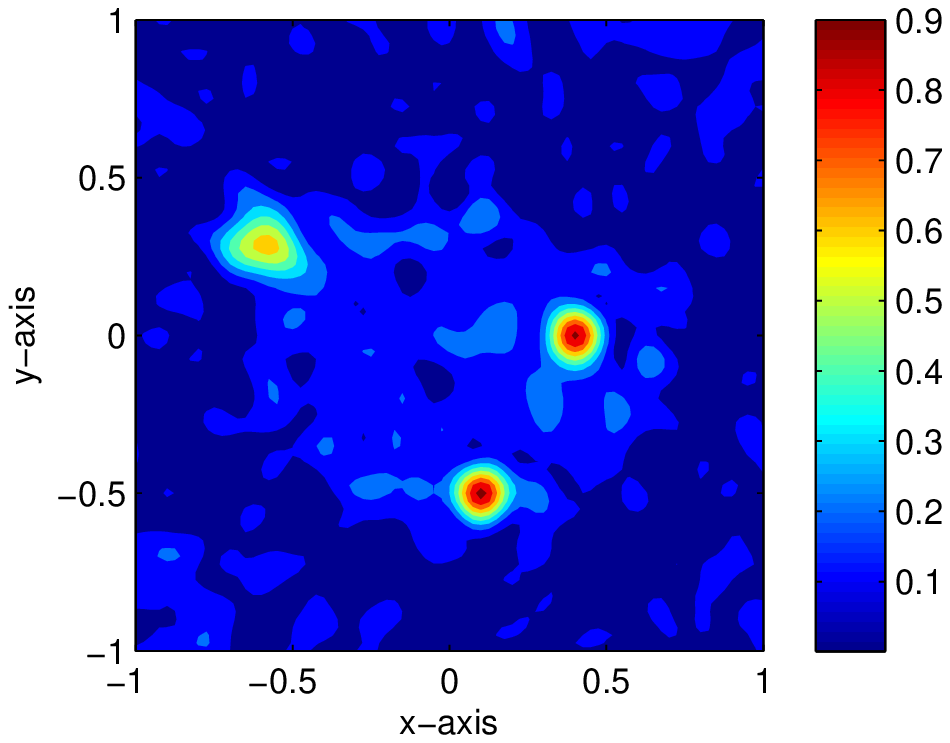}}
\end{center}
\caption{Maps of $\mathbb{W}(\mathbf{r};10)$.}
\label{Result1}
\end{figure}

Figure \ref{InfluenceFrequencies} shows the influence of the number of applied frequencies $S$. As we discussed in section \ref{sec:3}, increasing $S$ yields a more accurate image. Note that applying an infinite number of $S$ would yield good results in theory, but in this experiment, $S=10$ is sufficient for obtaining a good result.

\begin{figure}
\begin{center}
\subfigure[$S=1$]{\includegraphics[width=0.49\textwidth]{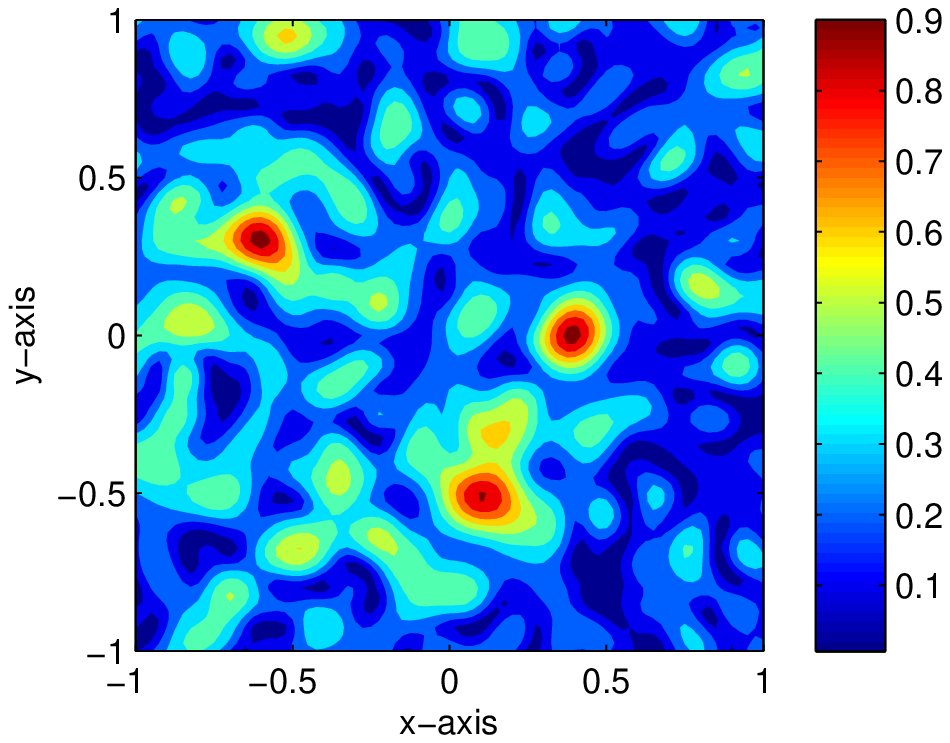}}
\subfigure[$S=3$]{\includegraphics[width=0.49\textwidth]{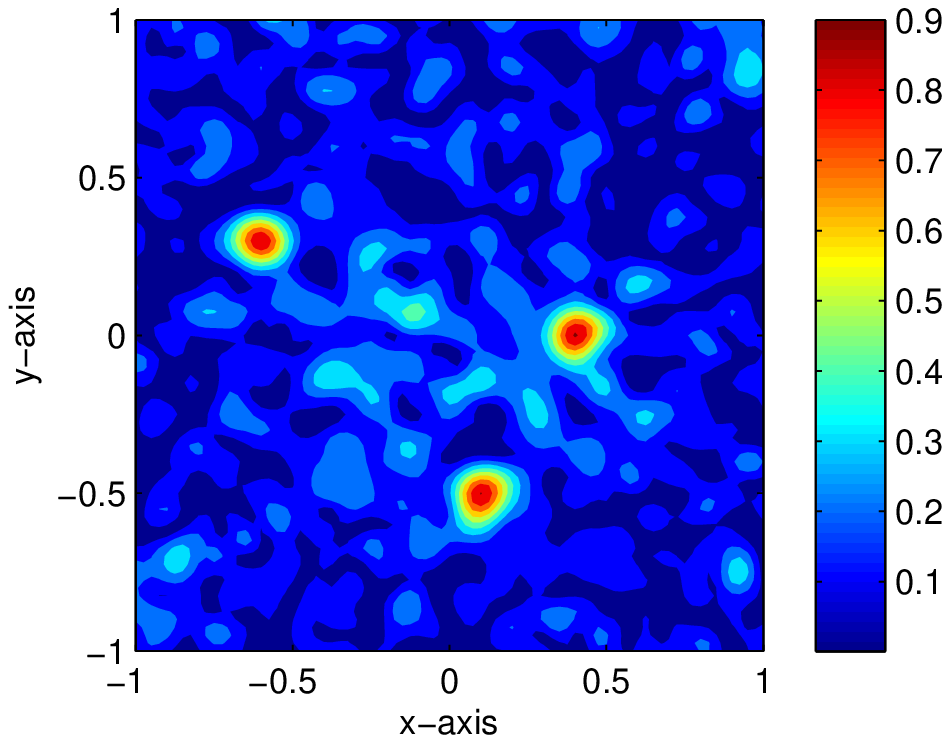}}
\subfigure[$S=7$]{\includegraphics[width=0.49\textwidth]{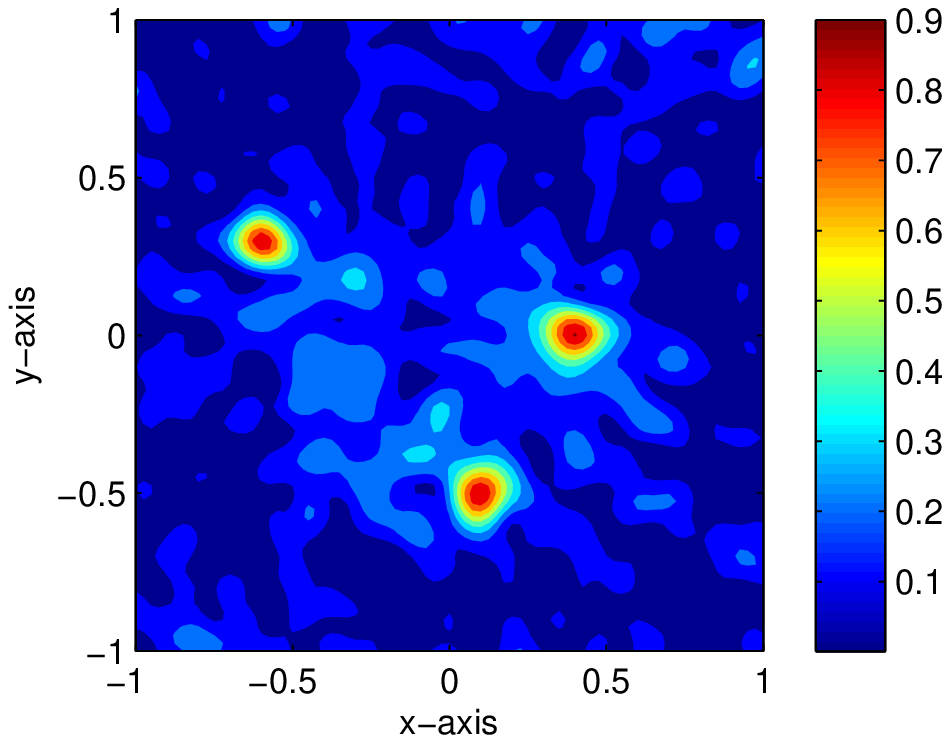}}
\subfigure[$S=20$]{\includegraphics[width=0.49\textwidth]{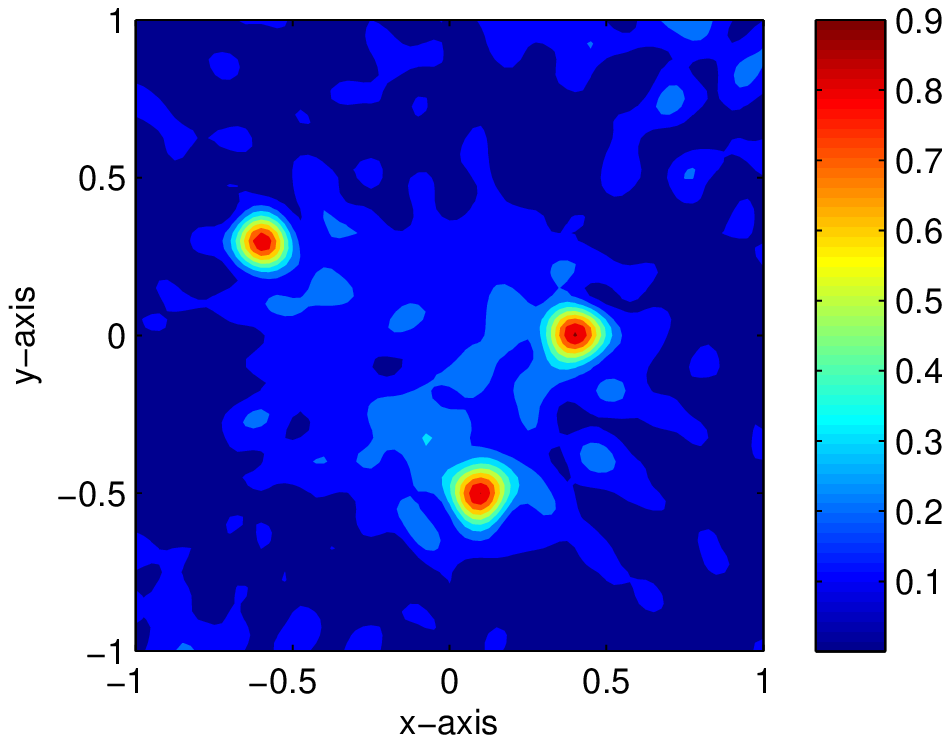}}
\end{center}
\caption{Maps of $\mathbb{W}(\mathbf{r};S)$.}
\label{InfluenceFrequencies}
\end{figure}

Based on the recent work \cite{P1}, the proposed algorithm can be applied to imaging of extended, crack-like electromagnetic inclusion(s) $\Gamma$ with a supporting curve $\gamma$ and a small thickness $h$. However, note that even a sufficiently large number of $N$ and $S$ applied to obtain a good image of a complex-shaped thin inclusion using the proposed algorithm can occasionally yield poor results (see Figure \ref{ComplexInclusion}). Furthermore, note that the elements of $\mathbb{M}$ are expressed as written by
\begin{multline*}
  F(\md_p,\md_q)\sim\sum_{m=1}^{M}\left[\frac{\eps-\eps_0}{\sqrt{\eps_0\mu_0}}
  +2\bigg(\frac{1}{\mu}-\frac{1}{\mu_0}\bigg)\md_p\cdot\mt(\mr_m)\md_q\cdot\mt(\mr_m)\right.\\
  \left.+2\bigg(\frac{1}{\mu_0}-\frac{\mu}{\mu_0^2}\bigg)\md_p\cdot\mn(\mr_m)\md_q\cdot\mn(\mr_m)\right]\exp\bigg(jk_0(\md_p+\md_q)\cdot \mr_m\bigg).
\end{multline*}
Therefore, $\mc$ in (\ref{VecDhat}) must be a linear combination of a unit tangential vector $\mt(\mr_m)$ and a normal vector $\mn(\mr_m)$ at $\mr_m\in\gamma$. If we have \textit{a priori} information of $\mt(\mr_m)$ and $\mn(\mr_m)$, we can obtain a good result. However, because this is not the case, it is difficult to obtain a good result. This is further explained in detail in \cite[Section 4.3.1]{PL3}.

\begin{figure}
\begin{center}
\subfigure[$N=48$ and $S=10$]{\includegraphics[width=0.49\textwidth]{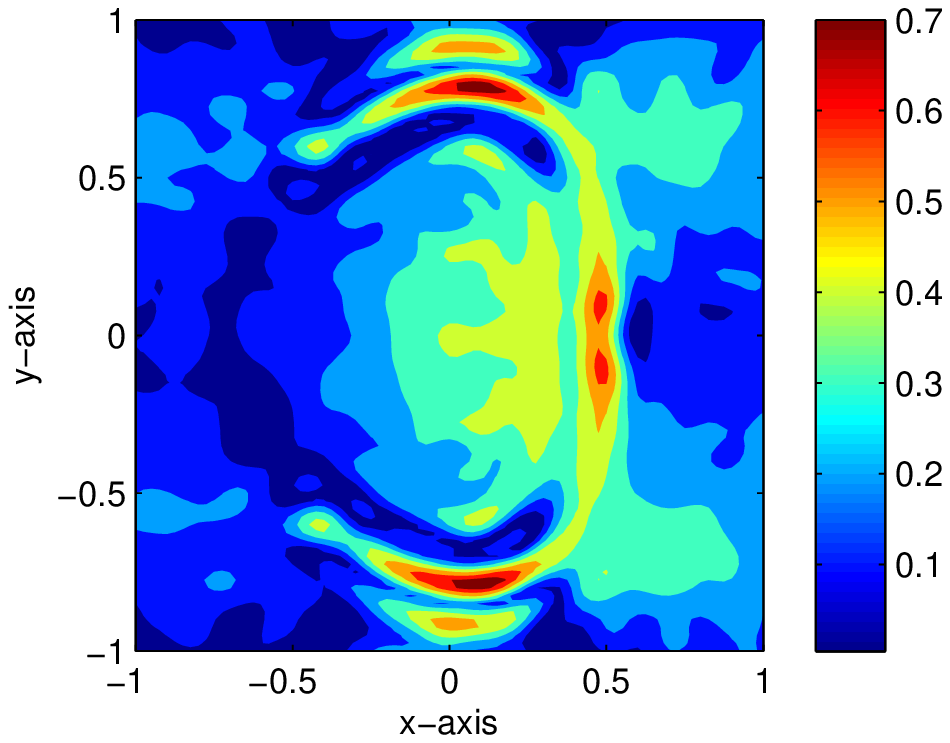}}
\subfigure[$N=64$ and $S=24$]{\includegraphics[width=0.49\textwidth]{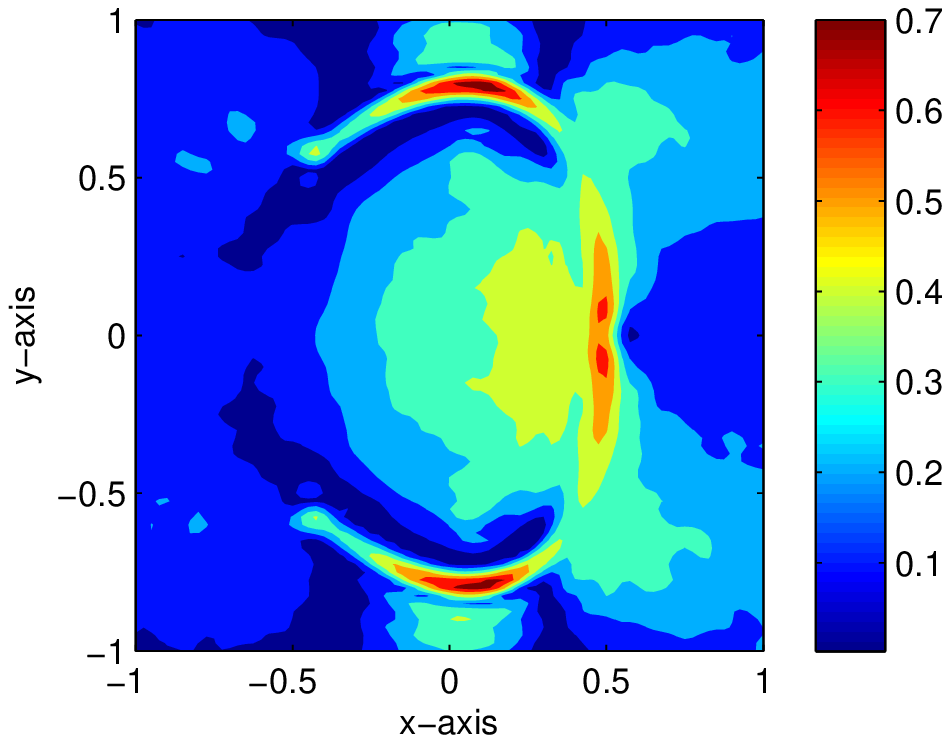}}
\end{center}
\caption{Maps of $\mathbb{W}(\mathbf{r};S)$.}
\label{ComplexInclusion}
\end{figure}

\section{Conclusion}\label{sec:5}
Using an integral representation formula and an indefinite integral of the Bessel function, we determined the structure of single and multiple electromagnetic imaging functions. Because of the oscillation aspect of the Bessel function, we confirmed the reason behind the improved imaging performance by successfully applying high and multiple frequencies.

Based on recent works in \cite{AGKPS,AKLP,P1,P2,PL2}, it has shown that subspace migration offers better results than the MUSIC and Kirchhoff migrations. Specially, subspace migration can be applied to limited-view inverse scattering problems. However, in order to determine the structure of subspace migration in the limited-view problem, the integration in Lemma \ref{Lemma2} on the subset of a unit circle must be evaluated; however, this evaluation is very difficult to perform. Therefore, identifying the imaging function structure in the limited-view problem will prove to be an interesting research topic. Moreover, in the imaging of crack-like inclusions, estimating unit tangential and normal vectors on such an inclusion and yielding relatively good results will be an interesting work.

Finally, we considered the imaging function for penetrable electromagnetic inclusions but it will be applied to the perfectly conducting inclusion(s) directly. Extension to the perfectly conducting target will be a forthcoming work. We further believe that the proposed strategy can be extended to a three-dimensional problem.

\section*{acknowledgments}
W.-K. Park would like to thank Habib Ammari for introducing \cite{A} and many valuable advices. This work was supported by the research program of Kookmin University in Korea, the Basic Science Research Program through the National Research Foundation of Korea (NRF) funded by the Ministry of Education, Science and Technology (No. 2012-0003207), and the WCU(World Class University) program through the National Research Foundation of Korea(NRF) funded by the Ministry of Education, Science and Technology R31-10049.


\begin{thebibliography}{00}
\bibitem{AS} M. Abramowitz and I. A. Stegun, {\em Handbook of Mathematical Functions, with Formulas, Graphs, and Mathematical Tables} (Dover, New York, 1996).
\bibitem{ADIM}D. \'Alvarez, O. Dorn, N. Irishina, and M. Moscoso, J. Comput. Phys., {\bf 228}, 5710 (2009).
\bibitem{A}H. Ammari, {\em Mathematical Modeling in Biomedical Imaging II: Optical, Ultrasound, and Opto-Acoustic Tomographies}, Lecture Notes in Mathematics, {\bf 2035} (Springer-Verlag, Berlin, 2011).
\bibitem{AGJK} H. Ammari, J. Garnier, V. Jugnon, and H. Kang, SIAM J. Control. Optim., {\bf 50}, 48 (2012).
\bibitem{AGKPS}H. Ammari, J. Garnier, H. Kang, W.-K. Park, and K. S{\o}lna, SIAM J. Appl. Math., {\bf 71}, 68 (2011).
\bibitem{AK}H. Ammari and H. Kang, {\em Reconstruction of Small Inhomogeneities from Boundary Measurements, Lecture Notes in Mathematics}, {\bf 1846} (Springer-Verlag, Berlin, 2004).
\bibitem{AKLP}H. Ammari, H. Kang, H. Lee and W.-K. Park, SIAM J. Sci. Comput., {\bf 32}, 894 (2010).
\bibitem{C}X. Chen, J. Electromagn. Waves Appl., {\bf 23} 1397 (2009).
\bibitem{CH}M. Cheney, Inverse Problems, {\bf 17}, 591 (2001).
\bibitem{CHM}D. Colton, H. Haddar and P. Monk, SIAM J. Sci. Comput., {\bf 24}, 719 (2002).
\bibitem{DEKPS}F. Delbary, K. Erhard, R. Kress, R. Potthast and J. Schulz, Inverse Problems, {\bf 24}, 015002 (2008).
\bibitem{D}M. Donelli, Prog. Electromagn. Res. M, {\bf 19}, 173 (2011).
\bibitem{DCGS}M. Donelli, I. J. Craddock, D. Gibbins and M. Sarafianou, Prog. Electromagn. Res. M, {\bf 18}, 179 (2011).
\bibitem{DL}O. Dorn and D. Lesselier, Inverse Problems, {\bf 22}, R67 (2006).
\bibitem{G} R. Griesmaier, Inverse Problems, {\bf 27}, 085005 (2011).
\bibitem{KR}A. Kirsch and S. Ritter, Inverse Problems, {\bf 16}, 89 (2000).
\bibitem{KSY}O. Kwon, J. K. Seo, and J.-R. Yoon, Commun. Pur. Appl. Math., {\bf 55}, 1 (2002).
\bibitem{LB}D. Lesselier and Duchene B., Prog. Electromagn. Res., {\bf 5}, 351 (1991).
\bibitem{MKP}Y.-K. Ma, P.-S. Kim and W.-K. Park, Prog. Electromagn. Res., {\bf 122} 311 (2012).
\bibitem{P1}W.-K. Park, Prog. Electromagn. Res., {\bf 106}, 225 (2010).
\bibitem{P2}W.-K. Park, Inverse Problems, {\bf 26}, 074008 (2010).
\bibitem{P3}W.-K. Park, Prog. Electromagn. Res., {\bf 110}, 237 (2010).
\bibitem{P4}W.-K. Park, J. Comput. Phys., {\bf 231}, 1426 (2012).
\bibitem{PL1}W.-K. Park and D. Lesselier, J. Comput. Phys., {\bf 228}, 8093 (2009).
\bibitem{PL2}W.-K. Park and D. Lesselier, Waves Random Complex Media, {\bf 22}, 3 (2012).
\bibitem{PL3}W.-K. Park and D. Lesselier, Inverse Problems, {\bf 25}, 075002 (2009).
\bibitem{PL4}W.-K. Park and D. Lesselier, Inverse Problems, {\bf 25}, 085010 (2009).
\bibitem{R}W. Rosenheinrich, \url{http://www.fh-jena.de/~rsh/Forschung/Stoer/besint.pdf}.
\bibitem{TKDA}L. Tsang, J. A. Kong, K.-H. Ding, and C. O. Ao, {\em Scattering of Electromagnetic Waves: Numerical Simulations} (New York, Wiley, 2001).
\end{thebibliography}
\end{document}